\documentclass[conference]{IEEEtran}
\IEEEoverridecommandlockouts
\usepackage{cite}
\usepackage{dashbox}
\usepackage[
    n,
    operators,
    advantage,
    sets,
    adversary,
    landau,
    probability,
    notions,    
    logic,
    ff,
    mm,
    primitives,
    events,
    complexity,
    asymptotics,
    keys]{cryptocode}

\usepackage{url}
\usepackage{amsmath,amssymb,amsfonts}
\usepackage{glossaries-extra}
\usepackage{graphicx}
\usepackage{float}
\usepackage{subfig}
\usepackage{framed}
\usepackage{textcomp}
\usepackage{xcolor}
\def\BibTeX{{\rm B\kern-.05em{\sc i\kern-.025em b}\kern-.08em
    T\kern-.1667em\lower.7ex\hbox{E}\kern-.125emX}}

\usepackage{bm}
\usepackage{multirow}

\usepackage{amsthm}
\theoremstyle{definition}
\newtheorem{definition}{Definition}

\newtheorem{prop}{Proposition}
\newtheorem{lemma}{Lemma}

\usepackage{moresize}
\usepackage{algorithm}
\usepackage{algpseudocode}
\usepackage{threeparttable}
    
\begin{document}
\bstctlcite{IEEEexample:BSTcontrol}

\title{A Critical Look into Threshold Homomorphic Encryption for Private Average Aggregation}

\author{\IEEEauthorblockN{Miguel Morona-Mínguez, Alberto Pedrouzo-Ulloa and Fernando Pérez-González}
\IEEEauthorblockA{atlanTTic, Universidade de Vigo \\
\{mmorona, apedrouzo, fperez\}@gts.uvigo.es}
 \thanks{This is the author-submitted version (preprint) of a paper published in the Proceedings of the 2nd IEEE International Conference on Federated Learning Technologies and Applications (FLTA 2024). The final version is available in IEEE Xplore: \protect\url{https://doi.org/10.1109/FLTA63145.2024.10840167}.}
 }

\maketitle

\begin{abstract}
Threshold Homomorphic Encryption (Threshold HE) is a good fit for implementing private federated average aggregation, a key operation in Federated Learning (FL). Despite its potential, recent studies have shown that threshold schemes available in mainstream HE libraries can introduce unexpected security vulnerabilities if an adversary has access to a restricted decryption oracle. This oracle reflects the FL clients' capacity to collaboratively decrypt the aggregated result without knowing the secret key. This work surveys the use of threshold RLWE-based HE for federated average aggregation and examines the performance impact of using smudging noise with a large variance as a countermeasure. We provide a detailed comparison of threshold variants of $\mathsf{BFV}$ and $\mathsf{CKKS}$, finding that $\mathsf{CKKS}$-based aggregations perform comparably to $\mathsf{BFV}$-based solutions.
\end{abstract}

\begin{IEEEkeywords}
Machine Learning, Federated Learning, Private Aggregation, Homomorphic Encryption, Multiple Keys.
\end{IEEEkeywords}

\section{Introduction}

\label{sec:intro}
Federated Learning (FL) was introduced in~\cite{MMRHA17,KMABBBBCC21} as a Machine Learning (ML) setting where, under the coordination of a central server, multiple clients collaborate with the aim of solving a learning task. One distinguishing feature of FL relies on the fact that the data used for training is distributed among the different clients, in such a way that datasets remain separate and stored locally in each of the clients' premises.

The learning process (see Fig.~\ref{fig:flprotocol}) is as follows: Starting from an initial common ML model $\vec{w}^*$, an FL protocol typically consists in a series of rounds in which, (1) the clients locally update the common model by partially training with only their own individual datasets, obtaining a local update $\vec{w}^i$, and (2) the central server or aggregator, combines all received local updates from the clients to obtain a new common ML model, which is done by computing an aggregation function $\vec{w}^* = f_{\Sigma}(\vec{w}^1,\ldots,\vec{w}^L)$. This process can be repeated several times till a preestablished convergence criteria, which measures the performance of the trained ML model, is fulfilled.

\begin{figure}[h!]
    \centering
    \includegraphics[width = \linewidth]{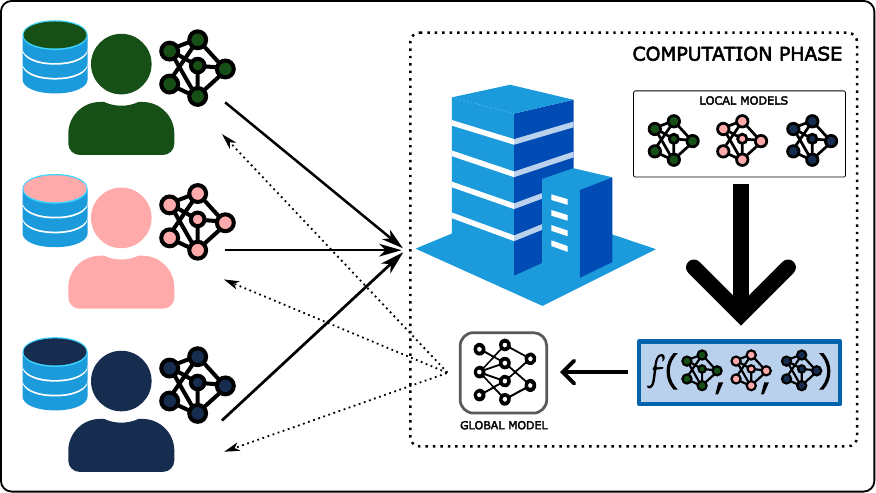}
    \caption{High-level description of the FL protocol.}
    \label{fig:flprotocol}
    \vspace{-0.4cm}
\end{figure}

\textbf{Privacy issues:} 
FL successfully avoids having all the training data gathered by the same centralized entity, subsequently reinforcing data control and privacy protection by isolating the local datasets in each client premises. Unfortunately, even though local data is not openly shared, if no further protection measures are applied, FL protocols still present significant privacy and security issues~\cite{MMRHA17}. Several works have shown how the exchanged clients' local ML updates can sometimes remember information of the data used during training, thus making them susceptible to various types of inference attacks, especially when adversaries have white-box access to both the exchanged local and aggregated updates~\cite{NSH19,LF20}.

By taking a closer look at the FL protocol, we observe that, even under a common security model as honest-but-curious with collusions and in contrast to the clients who only receive the aggregated ML update, the aggregator has direct access to all the individual updates sent by the clients. This additional information accessible to the aggregator aggravates the potential privacy risks posed by inference attacks, especially if the aggregator is compromised by an adversary. Consequently, this situation underlines the importance of incorporating adequate private aggregation solutions into the FL pipeline. As surveyed in~\cite{MOJC23}, a diverse number of technologies can be used for this purpose, although there is no definitive perfect candidate. 

One prominent example of those technologies corresponds to Homomorphic Encryption (HE) which, when extended to operate with multiple keys~\cite{MTBH21,AHSL22}, seems to precisely cover many of the requirements of private aggregation protocols.

\textbf{Scope of our work:} Recent research in the cryptographic community~\cite{CSBBC24,CCPSS24} has shown that mainstream HE schemes paired with multiple keys can introduce some unexpected security vulnerabilities
. In view of this privacy weakness, we focus here on how to correctly use modern HE schemes ($\mathsf{BFV}$~\cite{Brakerski12,FV12} for exact arithmetic and $\mathsf{CKKS}$~\cite{CKKS17} for approximate arithmetic) with multiple keys for the execution of private average aggregation. Our main interest lies on cross-silo FL settings, in which the ML model is built from stateful clients, who are always available and have enough computational power. Even so, we note that HE could also be adapted to work reasonably well in the cross-device setting if the clients, who are intermittently available, have still enough computational and memory resources for encryption/decryption.

\subsection{Main contributions}
We propose private aggregation methods based on the use of HE under multiple keys. Our solutions follow the blueprint of~\cite{MTBH21} making use of additive secret sharing to deal with secret keys. In particular, our main contribution is threefold:
\begin{itemize}
\item 
We survey the state of the art in threshold HE and its application for implementing efficient private aggregation protocols. In particular, we explore in detail the limitations of the commonly used $\mathsf{IND}$-$\mathsf{CPA}$ security model in characterizing the privacy vulnerabilities present in threshold HE when applied to FL protocols. We also discuss the main countermeasures to securely instantiate these schemes under the $\mathsf{IND}$-$\mathsf{CPA}^{\mathsf{D}}$ security model~\cite{LM21}.
\item Contrarily to what was previously established in~\cite{CSBBC24}, we show how, for the federated average rule, the $\mathsf{CKKS}$ scheme can still outperform $\mathsf{BFV}$ in terms of cipher expansion when a high smudging noise is used to guarantee privacy. We study and compare the range of parameters for which one scheme behaves better than the other.
\item We briefly discuss how, for the FL setting, the approximate nature of $\mathsf{CKKS}$ not only does not introduce additional cryptographic weaknesses as stated in~\cite{CSBBC24}, but also can bring some extra privacy protection benefits by naturally adding noise to the aggregated result. 
\end{itemize}

\subsection{Notation and structure}
Polynomials are denoted with regular lowercase letters, ignoring the polynomial variable (e.g., $a$ instead of $a(x)$) whenever there is no ambiguity. We also represent polynomials as column vectors of their coefficients $\vec{a}$. 
Let $R_q = \mathbb{Z}[x]/(1 + x^n)$, with $q \in \mathbb{N}$ and $n$ a power-of-two, be the polynomial ring in the variable $x$ reduced modulo $1 + x^n$ with coefficients belonging to $\mathbb{Z}_q$, but using the set of representatives $(-q/2, q/2]$. 
Let $\chi$ be the error distribution over $R_q$, whose coefficients are independently sampled from a discrete Gaussian with standard deviation $\sigma$ and truncated support over $[-B, B]$. 
Then, for a finite set $\mathcal{S}$, $x\leftarrow$ $\mathcal{S}$ denotes sampling $x$ uniformly at random from $\mathcal{S}$, while $x\leftarrow$ $\chi$ denotes sampling from the distribution $\chi$. Finally, both 
$||\vec{a}||$ and $||a||$ refer to the infinity norm of $||\vec{a}||$.

The rest of the document is organized as follows: Section~\ref{sec:prelim} briefly reviews the state of the art of HE and its application for private aggregation. We pay special attention to the privacy vulnerabilities present in threshold HE when the adversary has access to a restricted decryption oracle. 
Section~\ref{sec:proposedpriagg} details the baseline $\mathsf{BFV}$ and $\mathsf{CKKS}$ schemes. It also introduces threshold HE as a natural extension 
dealing with multiple keys inside the FL setting. We focus on describing the use of the smudging noise as the main countermeasure against the attacks discussed in~\cite{CSBBC24,CCPSS24}. We study its effects and we provide analytical comparisons among both threshold $\mathsf{BFV}$ and threshold $\mathsf{CKKS}$. 
We also discuss possible advantages that the approximate nature of $\mathsf{CKKS}$ can bring into the FL setting. 
Finally, Section~\ref{sec:conc} concludes with 
some future work lines.


\section{Preliminaries: State of the art for HE}
\label{sec:prelim}

Even though the application of HE for private aggregation has already been demonstrated in several works~\cite{SCRCS11,BJL16,ZLXWYL20,MSMSGS21}, the use of more modern lattice-based HE schemes, such as $\mathsf{BFV}$, $\mathsf{BGV}$, and $\mathsf{CKKS}$~\cite{ACCDGGHH19}, is more recent. Specifically, these schemes are based on the hardness of the Ring Learning with Errors (RLWE) problem and offer significant efficiency advantages by naturally supporting SIMD (Single-Instruction Multiple-Data) operations, which allow packing several values together under the same ciphertext~\cite{SV14}. These features have motivated research into the combination of RLWE-based HE with other complementary privacy-related tools. For instance, some works explore their integration with verification techniques to address malicious servers~\cite{FGP14,ACGS23}, and with Differential Privacy (DP) to better manage a larger number of colluding dishonest clients in FL~\cite{SCSSG23}, among others.

\subsection{Limitations for FL deployment with HE} An advantage of HE-based aggregations is that they naturally align with the communication flow of the described FL protocol between clients and the aggregator. First, the clients encrypt their updates, then the aggregator homomorphically computes the aggregation function $f_{\Sigma}(\vec{w}^1,\ldots,\vec{w}^L)$, and finally, the clients can decrypt the aggregated update. However, these works assume a very simple deployment scenario in which all clients share the same secret and public keys, implicitly imposing a strong non-colluding assumption among the aggregator and those parties with access to the secret key.

\subsubsection*{HE and multiple keys}
To overcome this limitation, we can upgrade HE schemes to handle ciphertexts under multiple keys~\cite{AHSL22}. Several works propose methods to homomorphically aggregate ciphertexts encrypted under different keys~\cite{MNSL22,PBCSSZ23}, but the most versatile option corresponds to threshold variants of single-key HE, such as the MHE (Multiparty HE) scheme proposed in~\cite{MTBH21}. In this scheme, the secret key associated with the public key of a single-key HE scheme is divided into several additive shares, which are distributed among the clients. The secret shares are generated during a setup phase that is run offline before starting the FL protocol. Note that MHE enables encryption of the local updates under a common secret key that is unknown to any single party, requiring decryption to be done collaboratively by all clients.

In return for the need for a setup phase, threshold HE schemes scale very well with the number of clients and allow for easy reuse of results from single-key HE. For example, in~\cite{MNSL22,PBCSSZ23}, efficient homomorphic aggregations are limited to average aggregation. In contrast, with threshold schemes, we could utilize the results of~\cite{CGPSSZ23}, where the authors studied the homomorphic evaluation of several Byzantine-robust aggregation rules (e.g., the median). Unfortunately, in~\cite{CSBBC24,CCPSS24}, the authors demonstrate how threshold HE schemes are particularly vulnerable when there is a non-negligible probability of incorrect decryption. Both works exemplify how this vulnerability can be exploited to implement effective key-recovery attacks against several mainstream HE libraries.

\subsection{Security models for HE: From $\mathsf{IND}$-$\mathsf{CPA}$ to $\mathsf{IND}$-$\mathsf{CPA}^\mathsf{D}$}
Previous works~\cite{MTBH21} using threshold HE discussed how these schemes could be proven to be secure under the $\mathsf{IND}$-$\mathsf{CPA}$ (Indistinguishability under Chosen-Plaintext Attack) security model by relying on the difficulty of breaking the RLWE problem. Under this passive security model, the adversary only has access to encryptions by means of an encryption oracle. Note that this security model is sufficient from the aggregator's perspective, who only sees RLWE ciphertexts and the public key. However, the absence of a decryption oracle does not fully capture the peculiarities of the aggregation functionality, especially from the clients' perspective, who can collaboratively decrypt ciphertexts without knowing the secret key. The consequences of having a decryption oracle for HE schemes is precisely an issue which is being explored since the introduction of the $\mathsf{IND}$-$\mathsf{CPA}^\mathsf{D}$ security model in~\cite{LM21,LMSS22}. 

\section{An analysis of Private Aggregation with HE}
\label{sec:proposedpriagg}

This section surveys the available algorithms to homomorphically evaluate $f_\Sigma(\cdot)$ in an FL protocol. We start by giving a brief overview of how HE fits inside the FL protocol. Next, we introduce additive HE and describe how a threshold variant can be built on top of the single-key counterparts. We then focus on the $\mathsf{IND}$-$\mathsf{CPA}^\mathsf{D}$ security model and the use of smudging noise as a countermeasure against the key-recovery attacks proposed in~\cite{CSBBC24,CCPSS24}. Finally, we analyze the performance degradation caused by the large smudging noise, and provide analytical comparisons among different solutions for private aggregation based on threshold $\mathsf{BFV}$ and threshold $\mathsf{CKKS}$. We indicate for which protocol parameters one scheme outperforms the other.

\textbf{General Overview:} The incorporation of HE to securely compute the aggregation function inside the FL protocol obeys the following two principles: (1) unprotected local data is isolated in different silos and remains in the clients' premises, and (2) all local updates leaving a silo are previously encrypted with HE. This work follows the blueprint of the threshold Multiparty Homomorphic Encryption (MHE) scheme proposed in~\cite{MTBH21} (see Section~\ref{sec:mhe} for further details). Figure~\ref{fig:priflprotocol} sketches the process followed to train a model. Note that steps $2$-$5$ are typically repeated several times till convergence is achieved:
\begin{enumerate} \vspace{-0.09cm}
    \item \textit{Setup step}: All involved clients generate individual secret and a collective public keys. An initial common ML model is agreed among the Clients and the Aggregator.
    \item \textit{Local training step}: Each Client trains locally with its data a local ML update.
    \item \textit{Input step}: Each Client encrypts its local ML update to be outsourced under a collective public key.
    \item \textit{Evaluation step}: The aggregator will be in charge of homomorphically computing the functionality $f_\Sigma$.
    \item \textit{Output step}: Finally, all involved parties collaboratively decrypt the aggregated ML update.
\end{enumerate}

\begin{figure}[h!]
    \centering
    \includegraphics[width = \linewidth]{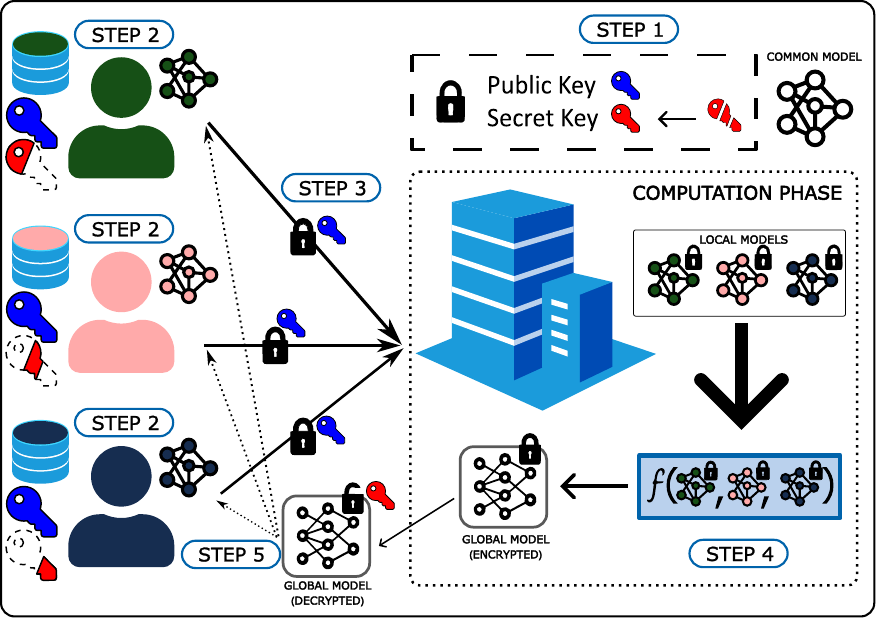}
    \caption{Protocol for private average aggregation.}
    \label{fig:priflprotocol}
    \vspace{-0.4cm}
\end{figure}

\subsection{Background: Single-Key HE}
\label{sec:background}
\textbf{}


As our interest in this work relies on the computation of average aggregations $\vec{w}^* = f_{\Sigma}(\vec{w}^1,\ldots,\vec{w}^L)$, we will restrict ourselves to additive HE schemes. For ease of exposition we will focus on the case of addition $f_{\Sigma}(\vec{w}^1,\ldots,\vec{w}^L) = \sum_i \vec{w}^i$.

\begin{definition}[Additive Homomorphic Encryption]\label{def:addhe}
    Let $\mathsf{E} = \{\mathsf{Setup}, \mathsf{SecKeyGen}, \mathsf{PubKeyGen}, \mathsf{Enc}, \mathsf{Dec}\}$ be an asymmetric encryption scheme, whose security is parameterized by $\lambda$. An additive homomorphic encryption scheme extends $\mathsf{E}$ with the $\mathsf{Add}$ procedure, having $\mathsf{E_{ahe}} = \{\mathsf{E}.\mathsf{Setup},$ $ \mathsf{E}.\mathsf{SecKeyGen}, \mathsf{E}.\mathsf{PubKeyGen}, \mathsf{E}.\mathsf{Enc}, \mathsf{E}.\mathsf{Dec}, \mathsf{Add}\}$, where:
    \begin{itemize}
        \item Let $c_{\vec{v}}$ and $c_{\vec{w}}$ be two ciphertexts encrypting, respectively, $\vec{v}$ and $\vec{w}$. The procedure $\mathsf{Add}$ satisfies:
        \begin{equation} \label{eq:correctness-add}
            \mathsf{E_{ahe}}.\mathsf{Dec}(\mathsf{E_{ahe}}.\mathsf{Add}(c_{\vec{v}}, c_{\vec{w}})) = \vec{v} + \vec{w}.
        \end{equation}
    \end{itemize}
\end{definition}



We have chosen two \textit{state-of-the-art} RLWE-based cryptosystems to instantiate the $\mathsf{E}_\mathsf{ahe}$ scheme which is informally introduced in Definition~\ref{def:addhe}. Specifically, they correspond to $\mathsf{BFV}$~\cite{Brakerski12,FV12} and $\mathsf{CKKS}$~\cite{CKKS17}, being each of them, respectively, a representative of an exact and an approximate HE scheme. We refer the reader to Table~\ref{tab:bfv-ckks} for more details on their primitives and parameters.

Both schemes allow us to perform SIMD-style additions by packing a total of $n$ values.\footnote{For the more general case of SIMD-style additions and multiplications, we can pack in each separate ciphertext a total of, respectively, $n$ and $n/2$ values for $\mathsf{BFV}$ and $\mathsf{CKKS}$.} However, they still present some significant differences regarding correctness~\cite{ABMP24}. While for $\mathsf{BFV}$ we can choose adequate parameters which satisfy Eq.~\eqref{eq:correctness-add} for all the homomorphic additions needed to compute $f_\Sigma(\cdot)$, capturing the behavior of approximate HE requires to relax the exact correctness of $\mathsf{E}$. This can be done by considering $\mathsf{E_{ahe}}.\mathsf{Dec}(\mathsf{E_{ahe}}.\mathsf{Add}(c_{\vec{v}}, c_{\vec{w}})) \approx \vec{v} + \vec{w}$ instead. Now, given a certain error margin $\epsilon > 0$, we say that the $\mathsf{CKKS}$ scheme is approximately correct, if the following holds:
\begin{equation*}
||f(\vec{w}^1,\ldots,\vec{w}^L) - \mathsf{E_{ahe}}.\mathsf{Dec}(f(c_{\vec{w}^1},\ldots,c_{\vec{w}^L})) || < \epsilon.
\end{equation*}



\textbf{Some details on the HE schemes:}
To ensure a fair comparison between $\mathsf{BFV}$ and $\mathsf{CKKS}$, we have made some minor modifications to the original $\mathsf{CKKS}$ scheme described in~\cite{CKKS17}. These changes primarily involve sampling the randomness in both $\mathsf{BFV}$ and $\mathsf{CKKS}$ from the same distributions. Additionally, $\mathsf{CKKS}$ and $\mathsf{BFV}$ make use of different slot packing methods~\cite{SV14,CKKS17}, which produce some variations in the operations performed before and after encryption and decryption. However, because our interest in this work lies in homomorphic additions, for which SIMD-type operations hold without packing, we have removed this step in both schemes. Consequently, some of the procedures described in Table~\ref{tab:bfv-ckks} become identical for both schemes. 


\begin{table}[!t] 
	\renewcommand{\arraystretch}{1.9}
	\caption{Summary of $\mathsf{BFV}$ \& $\mathsf{CKKS}$ HE Schemes}
	\label{tab:bfv-ckks}
    \scriptsize 
	
	\begin{tabular}{|m{1.3in}|m{0.25in}|m{1.35in}|} 
		\hline
		\multicolumn{3}{|c|}{Parameters remarks}\\
		\hline
		\multicolumn{3}{|p{3.3in}|}{ 
    
        $\mathsf{BFV}$ \& $\mathsf{CKKS}$: $R_q$ is the ciphertext ring.
        \newline $\mathsf{BFV}$: 
                $\Delta = \lfloor \frac{q}{t} \rfloor$. $R_t$ is the plaintext ring, with $q > t$.
                  \newline $\mathsf{CKKS}$: 
                  Given an arbitrary target margin error $\epsilon$ and $B$, 
                  we can set the scale $\Delta$. 
		}\\
		\hline
		
		\multicolumn{3}{|c|}{Cryptographic Primitives}\\
        \hline
		\multirow{2}{*}{$\mathsf{\mathsf{E}_\mathsf{ahe}.Setup()} = pp$}
		& \textsf{BFV} & Select $pp \leftarrow (t, n, q, \sigma, B, \Delta)$ \\
		\cline{2-3}
		& \textsf{CKKS} & Select $pp \leftarrow (\epsilon, 
  n, q, \sigma, B, \Delta)$ \\
		\hline
		$\mathsf{\mathsf{E}_\mathsf{ahe}.SecKeyGen}(1^\lambda) = \mathsf{sk}$
		& \multicolumn{2}{p{1.6in}|}{Given the security parameter $\lambda$, sample $s \leftarrow R_3$. Output $\mathsf{sk} = s$} \\
		\hline
		$\mathsf{\mathsf{E}_\mathsf{ahe}.PubKeyGen(sk)} = \mathsf{pk}$
		&  \multicolumn{2}{p{1.6in}|}{Sample $p_1 \leftarrow R_q$, and $e \leftarrow \chi$. $\mathsf{pk} = (p_0, p_1) = (-\mathsf{sk}\cdot p_1 + e, p_1)$} \\		
		\hline
		$\mathsf{\mathsf{E}_\mathsf{ahe}.Enc}(\mathsf{pk}, m) = \mathsf{ct}$
		& \multicolumn{2}{p{1.6in}|}{Sample $u \leftarrow R_3$ and $e_0, e_1 \leftarrow \chi$. $\mathsf{ct} = (c_0,c_1) = (\Delta m + u \cdot p_0 + e_0, u \cdot p_1 + e_1)$} \\
		\hline
  	$\mathsf{\mathsf{E}_\mathsf{ahe}.Add}(\mathsf{ct}, \mathsf{ct}') = \mathsf{ct}_{\mathsf{add}}$
		&  \multicolumn{2}{p{1.6in}|}{Given $\mathsf{ct} = (c_0, c_1)$ and $\mathsf{ct}' = (c_0', c_1')$, $\mathsf{ct}_{\mathsf{add}} = (c_0 + c_0', c_1 + c_1')$} \\	
		\hline
		\multirow{2}{*}{$\mathsf{\mathsf{E}_\mathsf{ahe}.Dec}(\mathsf{sk}, \mathsf{ct}) = m$}
		& \textsf{BFV} & $m = {[\lfloor \frac{t}{q} {[c_0 + c_1 \cdot s]}_q \rceil]}_t$ \\
		\cline{2-3}
		& \textsf{CKKS} & $m \approx \frac{{[c_0 + c_1\cdot s]}_q}{\Delta}$ \\
		\hline
  \multicolumn{3}{|c|}{Decryption remarks}\\
		\hline
		\multicolumn{3}{|p{3.3in}|}{ 
        $\mathsf{BFV}$: Decryption correctness holds if $(2n +1)B < \frac{q}{2t} - \frac{t}{2}$ (see~Lemma~\ref{lemma:correct-decryption-single-key}), being $n > \delta_R$ an upper bound for the expansion factor $\delta_R$ of the ring $R$~\cite{deCJV21}. 
        \newline $\mathsf{CKKS}$: 
        After decryption, we obtain $\tilde{m} = m + e_{ct} \approx m$, where $e_{ct}$ is the internal error of the ciphertext, with $||e_{ct}|| < \frac{(2n + 1)\cdot B}{\Delta}$. 
		}\\
		\hline
	\end{tabular}
 \vspace{-0.4cm}
\end{table}

\subsection{A Protocol for Private Aggregation with Multiparty HE}
\label{sec:mhe}
To securely evaluate the aggregation functionality $f_\Sigma(\cdot)$, we resort to the extension of the single-key HE schemes from Section~\ref{sec:background} into their threshold variants. In particular, we make use of the Multiparty Homomorphic Encryption (MHE) scheme introduced in~\cite{MTBH21} as a solution to the following \emph{multiparty aggregation problem}: 
\begin{definition}[Adapted from Def.~$1$ in~\cite{MTH19} to our FL setting]\label{def:mpc} Let $\mathcal{C} = \{C_1, C_2, \ldots, C_L\}$ be a set of $L$ clients, where each client $C_i$ holds an input $\vec{w}^i$ (\emph{input and receiver parties}). Let $f_\Sigma(\vec{w}^1, \vec{w}^2, \ldots, \vec{w}^L) = \vec{w}^*$ be the average aggregation function ($f_{\Sigma}$ is called \emph{ideal functionality}) over the \emph{input parties}. Let $\mathcal{A}$ be a static semi-honest adversary that can corrupt up to $L - 1$ clients in $\mathcal{C}$ and let $\mathcal{C_{A}}$ be the subset of clients corrupted by $\mathcal{A}$. Then, the \emph{secure multiparty aggregation problem} consists in providing $\mathcal{C}$ with $\vec{w}^*$, yet $\mathcal{A}$ must learn nothing more about $\{\vec{w}^i\}_{C_i \notin \mathcal{C_{A}} }$ than what can be deduced from the inputs $\{\vec{w}^i\}_{C_i \in \mathcal{C_{A}}}$ and output $\vec{w}^*$ it controls (this property is called \emph{input privacy}).
\end{definition}

Consequently, a solution to the \emph{secure multiparty aggregation} problem consists in a protocol $\pi_{f_\Sigma}$ which realizes the \emph{ideal functionality} $f_\Sigma$ while also preserving the \emph{input privacy}~\cite{Lindell17}.

\begin{table}[!t] 
	\renewcommand{\arraystretch}{1.3}
	\caption{Multiparty Private Aggregation Protocol}
	\label{tab:mhempc}
	\centering \small 
	\begin{threeparttable}[b]
	\begin{tabular}{|m{0.5in}|m{2.6in}|} 
		\hline
		\multicolumn{2}{|p{3.2in}|}{
		
    \textbf{Public input}: ideal aggregation functionality $f_\Sigma$ to be  computed\newline
    \textbf{Private input}: $\vec{w}^i$ for each $C_i \in \mathcal{C}$\newline
    \textbf{Output} for all clients in $\mathcal{C}$: $\vec{w}^* = f(\vec{w}^1, \vec{w}^2, \ldots, \vec{w}^L)$
		}\\
		\hline
		
		Setup & All clients $C_i$ instantiate the multiparty homomorphic scheme $\mathsf{E_{mhe}}$ \newline
		$\mathsf{sk}_i = \mathsf{E_{mhe}}.\pi_{i, \mathsf{SecKeyGen}}(\lambda, \kappa)$\tnote{*} \newline
		$\mathsf{cpk} = \mathsf{E_{mhe}}.\pi_{\mathsf{PubKeyGen}}(\kappa, \mathsf{sk}_1, \ldots, \mathsf{sk}_L)$ \\		
		\hline
		Input & Each $C_i$ encrypts its input $\vec{w}^i$ and provides it to the \textit{Aggregator}\tnote{+} \newline
		$c_{\vec{w}^i} = \mathsf{E_{mhe}.Enc}(\mathsf{cpk}, \vec{w}^i)$\\		
		\hline
		Evaluation & The \textit{Aggregator} computes the encrypted output for the ideal functionality $f_\Sigma$ relying on $\mathsf{E_{mhe}.Add}$\newline
		$c_{\vec{w}^*}= f_\Sigma(c_{\vec{w}^1}, c_{\vec{w}^2}, \ldots, c_{\vec{w}^L})$\\
		\hline
		Output & The parties in $\mathcal{C}$ execute the decryption protocol \newline
		$\vec{w}^* = \mathsf{E_{mhe}}.\pi_{\mathsf{Dec}}(\mathsf{sk}_1, \ldots, \mathsf{sk}_L, c_{\vec{w}^*})$\\
		\hline
	\end{tabular}
	   \begin{tablenotes}
     \item[*] $\kappa$ parameterizes the homomorphic capacity of the scheme $\mathsf{E_{Add}}$.
     \item[+] The \textit{Aggregator} is in charge of performing the aggregation.
   \end{tablenotes}
  \end{threeparttable}
  \vspace{-0.8cm}
\end{table}

We can make use of the general MHE-based solution proposed in~\cite{MTBH21}, which is proved on the Common Reference String (CRS) model,\footnote{It assumes that all parties have access to a common random string.} and it also assumes that all parties are connected through authenticated channels. The description of our particularized protocol for private aggregation is described in Table~\ref{tab:mhempc}. It relies on the existence of the following augmented multiparty encryption scheme: 
\begin{definition}[Multiparty Additive HE]\label{def:mhe} Let $\mathsf{E}_\mathsf{ahe}$ 
be the asymmetric and additive homomorphic encryption scheme from Definition~\ref{def:addhe}, whose security is parameterized by $\lambda$ (see Table~\ref{tab:bfv-ckks} for two concrete examples with $\mathsf{BFV}$ and $\mathsf{CKKS}$), and let $\mathsf{S} = (\mathsf{S.Share}, \mathsf{S.Combine})$ be an $L$-party secret sharing scheme. The associated \emph{multiparty homomorphic encryption scheme} ($\mathsf{E}_{\mathsf{mhe}}$) is obtained by applying the secret-sharing scheme $\mathsf{S}$ to $\mathsf{E_{ahe}}$'s secret key $\mathsf{sk}$ (\emph{ideal secret key}) and is defined as the tuple $\mathsf{E}_{\mathsf{mhe}} = (\mathsf{E_{ahe}.Enc}, \mathsf{E_{ahe}.Dec}, \mathsf{E_{ahe}.Add}, \mathsf{E_{ahe}}^{\mathsf{S}})$, where $\mathsf{E_{ahe}}^{\mathsf{S}} = (\pi_{\mathsf{SecKeyGen}}, \pi_{\mathsf{PubKeyGen}}, \pi_{\mathsf{Dec}})$ is a set of multiparty protocols executed among the clients in the set $\mathcal{C}$, and having the following \emph{private ideal functionalities} for each client $C_i$:
\begin{itemize}
    \item \emph{Ideal secret-key generation}:
    
    $f_{i,\pi_{\mathsf{SecKeyGen}}}(\lambda) = \mathsf{S}.\mathsf{Share}_i(\mathsf{E_{ahe}}.\mathsf{SecKeyGen}(\lambda)) = \mathsf{sk}_i$.
    \item \emph{Collective public-key generation}:
    
    $f_{\pi_{\mathsf{PubKeyGen}}}(\mathsf{sk}_1, \mathsf{sk}_2, \ldots, \mathsf{sk}_L)$ \\ $= \mathsf{E_{ahe}}.\mathsf{PubKeyGen}(\mathsf{S}.\mathsf{Combine}(\mathsf{sk}_1, \mathsf{sk}_2, \ldots, \mathsf{sk}_L))$.
    \item \emph{Collective decryption}:

    $f_{\pi_{\mathsf{Dec}}}(\mathsf{sk}_1, \mathsf{sk}_2, \ldots, \mathsf{sk}_L, \mathsf{ct})$ \\ $= \mathsf{E_{ahe}}.\mathsf{Dec}(\mathsf{S}.\mathsf{Combine}(\mathsf{sk}_1, \mathsf{sk}_2, \ldots, \mathsf{sk}_L),\mathsf{ct})$.
\end{itemize}

\end{definition}
Definition~\ref{def:mhe} particularizes Def. $2$ from~\cite{MTH19} to the case in which the encryption scheme $\mathsf{E_{ahe}}$ is additive homomorphic. 

\textbf{Concrete instantiations for Multiparty HE:}
The private ideal functionalities introduced in Definition~\ref{def:mhe} can be implemented by the clients with concrete protocols: 
\begin{itemize}
    \item $f_{i,\pi_{\mathsf{SecKeyGen}}}(\lambda)$: Each client $C_i$ independently runs the procedure $\mathsf{\mathsf{E}_\mathsf{Add}.SecKeyGen}(1^\lambda) = \mathsf{sk}_i$.
    \item $f_{\pi_{\mathsf{PubKeyGen}}}(\mathsf{sk}_1, \mathsf{sk}_2, \ldots, \mathsf{sk}_L)$: Given a common random polynomial $p_1$, each client $C_i \in \mathcal{C}$ samples $e_{\pk,i} \leftarrow \chi$ and discloses to the other clients $p_{0,i} = -p_1 \cdot \mathsf{sk_i} + e_{\pk,i}$. The collective public key is computed as: 
    \begin{equation*}\mathsf{cpk} = (\displaystyle \sum_{C_i \in \mathcal{C}}p_{0,i},p_1) = (-p_1\underbrace{\sum_i \mathsf{sk}_i}_{\mathsf{sk}} + \sum_i e_{\pk,i},p_1).
    \end{equation*}
    \item $f_{\pi_{\mathsf{Dec}}}(\mathsf{sk}_1, \mathsf{sk}_2, \ldots, \mathsf{sk}_L, \mathsf{ct})$: The corresponding collaborative decryption protocol can be divided in two phases.
    \begin{itemize}
        \item Given a ciphertext $\mathsf{ct}  = (c_0, c_1)$ encrypted under the ideal secret-key $\mathsf{sk}$, each client computes its partial decryption of $\mathsf{ct}$.  Each client $C_i$ samples $e_{\textsf{smg},i} \leftarrow \chi$ and discloses:
            \begin{equation*}
                h_i = \mathsf{sk_i}\cdot c_1 + e_{\textsf{smg},i}.
            \end{equation*}
        \item Given all the decryption shares $h_i$ from the clients, the protocol outputs:
            \begin{align*}
             d & \mbox{ } = \mbox{ } {\left[ c_0 + \sum_{C_i \in \mathcal{C}}h_{i}\right]}_q \\ & \mbox{ } = \mbox{ } \Delta m + \underbrace{e\cdot v + e_0 + s\cdot e_1}_{e_\mathsf{ct}} + \underbrace{\sum_i e_{\mathsf{smg},i}}_{e_{\mathsf{smg}}}.
            \end{align*}
    \end{itemize}
\end{itemize}

The final step for decryption depends on whether we use $\mathsf{BFV}$ or $\mathsf{CKKS}$ as the baseline single-key HE scheme:
\begin{itemize}
    \item For $\mathsf{BFV}$, the obtained decryption is $m = {[ \lfloor \frac{t}{q}d \rceil ]}_t$.
    \item For $\mathsf{CKKS}$, the obtained decryption is $m + \frac{e_\mathsf{ct} + e_{\mathsf{smg}}}{\Delta} = \frac{d}{\Delta}$.
\end{itemize}
We can obtain lower bounds for $q$, similar to the single-key counterparts, by taking into account the larger noise terms present in the threshold versions (see Section~\ref{sec:comp-bfv-ckks}).

\subsection{Multiparty HE with Countermeasures for $\mathsf{IND}$-$\mathsf{CPA}^\mathsf{D}$}
\label{sec:countermeasures}
The new notion $\mathsf{IND}$-$\mathsf{CPA}^\mathsf{D}$ extends $\mathsf{IND}$-$\mathsf{CPA}$ by allowing the adversary to have access to a very restricted decryption oracle which can only be used for genuine ciphertexts, or ciphertexts obtained from genuine ciphertexts through valid homomorphic operations. By adapting the definition to our specific case with the aggregation function $f_\Sigma$, the idea is that an adversary knowing $\{\vec{w}^1, \ldots, \vec{w}^L \}$ and $f_\Sigma$, can also obtain $f_{\Sigma}(\vec{w}^1,\ldots,\vec{w}^L)$. Then, if this adversary has access to $\{\mathsf{Enc}(\vec{w}^1), \ldots, \mathsf{Enc}(\vec{w}^L)\}$ and homomorphically evaluates the aggregation function $f_\Sigma$ as $f_\Sigma(\mathsf{Enc}(\vec{w}^1), \ldots, \mathsf{Enc}(\vec{w}^L))$, then she should not gain more information from $\mathsf{Dec}\left(f_\Sigma \left( \mathsf{Enc}(\vec{w}^1), \ldots, \mathsf{Enc}(\vec{w}^L) \right)\right)$ than what she can already obtain from $f_{\Sigma}(\vec{w}^1,\ldots,\vec{w}^L)$. Unfortunately, in~\cite{LM21} the authors showed how some HE schemes, which are $\mathsf{IND}$-$\mathsf{CPA}$ secure, leak information through its difference 
and, hence, are not $\mathsf{IND}$-$\mathsf{CPA}^\mathsf{D}$ secure. An example is the case of approximate HE schemes like $\mathsf{CKKS}$~\cite{CKKS17}, for which 
the difference $\mathsf{Dec}(\mathsf{Enc}(\vec{w}^i)) - \vec{w}^i$ directly leaks the internal noise of the underlying RLWE sample, then allowing the adversary to break the RLWE indistinguishability assumption.

Till very recently, the cryptographic community believed that only approximate HE presented this vulnerability and that exact HE schemes like $\mathsf{BFV}$~\cite{Brakerski12,FV12}, $\mathsf{BGV}$~\cite{BGV14} or $\mathsf{CGGI}$~\cite{CGGI16} were invulnerable to this type of attacks, hence being these schemes naturally safe under $\mathsf{IND}$-$\mathsf{CPA}^\mathsf{D}$. However, a couple of new works~\cite{CSBBC24,CCPSS24} have shown that this vulnerability could also be present in practice for exact schemes, as soon as they present a non-negligible probability of incorrect decryption. Both works exemplify how this vulnerability can be exploited to implement effective key-recovery attacks against several mainstream HE libraries.

In both works~\cite{CSBBC24,CCPSS24}, the authors illustrate how these types of attacks are especially relevant when dealing with threshold HE schemes. Fortunately, they also propose several countermeasures to achieve $\mathsf{IND}$-$\mathsf{CPA}^\mathsf{D}$ security when using threshold HE. In particular, the addition during decryption of an adequate smudging noise with $\lambda$-independent variance can do the job. It is worth highlighting that~\cite{CSBBC24} showcases how this countermeasure significantly impacts the cryptosystem's parameters, subsequently reducing its efficiency. According to the authors, this effect is even more pronounced for $\mathsf{CKKS}$, for which using a large-variance smudging noise is likely to severely reduce the precision of the decrypted result.


\subsubsection{Impact of the smudging noise}


In the end, given a ciphertext $\mathsf{ct}$, the key-recovery attacks mentioned above aim at extracting its noise component $e_\mathsf{ct}$. If this extraction is feasible, the secret key can be recovered via linear algebra techniques. For approximate HE schemes, this process is relatively easy once we have access to the decryption. However, for exact HE schemes, these attacks attempt to force decryption failures as a mechanism to estimate the corresponding $e_\mathsf{ct}$ component.

Therefore, even in extreme cases where the post-processed decryption can become unusable, adding a large enough smudging noise after decryption is required to hide any information leakage regarding the original error $e_\mathsf{ct}$.

In relation to this countermeasure, recent work by~\cite{ABMP24} discusses how the \textit{application-agnostic} nature of $\mathsf{IND}$-$\mathsf{CPA}^\mathsf{D}$ often leads to impractically large parameters when adding a large smudging noise. This is due to the fact that $e_\mathsf{smg}$ must statistically hide $e_\mathsf{ct}$ for all possible homomorphic circuits which satisfy correctness for the initially chosen cryptosystem parameters with the $\mathsf{E_{ahe}}.\mathsf{Setup}()$ procedure. The authors propose a relaxation of this definition, termed \textit{application-aware} $\mathsf{IND}$-$\mathsf{CPA}^\mathsf{D}$, which is more suitable for the FL setting.

In our work, we follow their guidelines and assume that both Clients and the Aggregator will behave semi-honestly
, which here means that they are \textit{expected to follow exactly the prescribed instructions to compute} $c_{\vec{w}^*}= f_\Sigma(c_{\vec{w}^1}, c_{\vec{w}^2}, \ldots, c_{\vec{w}^L})$ \textit{and encrypt} $\{c_{\vec{w}^1}, c_{\vec{w}^2}, \ldots, c_{\vec{w}^L}\}$ (see Tables~\ref{tab:bfv-ckks} and~\ref{tab:mhempc}).

\textbf{Collaborative decryption protocol:} Given again $\mathsf{ct}$, once all partial decryptions $h_i = \mathsf{sk_i}\cdot c_1 + e_{\textsf{smg},i}$ are gathered, the adversary has direct access to $d = \Delta \cdot m + e_{\mathsf{ct}} + e_{\textsf{smg}}$. Our objective is to obtain $e_{\textsf{smg}}$ such that $e_\mathsf{ct} + e_{\textsf{smg}}$ is statistically indistinguishable from fresh noise $\tilde{e}_{\textsf{smg}}$ (see the Smudging Lemma in~\cite{AJLTVW12}). To ensure $\mathsf{IND}-\mathsf{CPA}^\mathsf{D}$ security, we adhere to the practical guidelines provided in~\cite{CSBBC24}, which recommend a Gaussian smudging noise with variance $\sigma_{\mathsf{smg}}^2 = 2^\lambda \sigma_{\mathsf{ct}}^2$. 
By applying an upper-bound $B = \mathcal{O}(\sigma)$,\footnote{$B = 6\sigma$ is commonly used by HE libraries.}
 this results in $B_{\mathsf{smg}} = 2^{\frac{\lambda}{2}} B_{\mathsf{ct}}$ (e.g., $\lambda = 128$ is suggested in~\cite{MTBH21}).

\subsection{Comparison analysis between $\mathsf{BFV}$ and $\mathsf{CKKS}$}
\label{sec:comp-bfv-ckks}

In~\cite{CSBBC24}, the authors discuss how the fact that $\mathsf{CKKS}$ is particularly vulnerable to the exposure of decryptions under $\mathsf{IND}$-$\mathsf{CPA}^\mathsf{D}$ makes working with its threshold variant, by following the blueprint of~\cite{MTBH21} (see Section~\ref{sec:mhe}), less useful compared to other exact schemes such as $\mathsf{BFV}$. The main argument provided relies on the fact that the high variance of the required smudging noise is likely to jeopardize the precision of the finally decrypted results. Our aim in this section is to elaborate more precisely on this comparison in the context of FL. In particular, we compare the bit precision achieved by the threshold variants of both $\mathsf{BFV}$ and $\mathsf{CKKS}$ for the homomorphic evaluation of the average aggregation rule.

\textbf{Choosing multiparty cryptosystems' parameters:} We can upper-bound the noise of a fresh single-key ciphertext by $(2n + 1)B$, relying on Lemma~\ref{lemma:correct-decryption-single-key} (see Appendix). To extend this result to the multiparty HE schemes from Section~\ref{sec:mhe}, we must consider the following: (1) The secret key and error for the $\mathsf{cpk}$ are $L$ times larger, which implies that the noise of fresh ciphertexts is now upper-bounded by $B(2nL + 1)$. (2) During the homomorphic execution of the $f_\Sigma$ functionality, the underlying noise is further increased by a factor of $L$. (3) During the collective decryption of the resulting $\mathsf{ct}$, every client adds to $e_\mathsf{ct}$ a smudging noise satisfying $||e_{\mathsf{smg},i}|| \leq B_{\mathsf{smg}}$.

As a result, the final noise $e_\mathsf{ct} + e_\mathsf{smg}$ present in the collective decryption of $\mathsf{ct}$ is upper-bounded by $B^{\mathsf{MP}}_\mathsf{ct} = B_\mathsf{ct} + LB_\mathsf{smg} =  (1 + L2^{\lambda/2})B_\mathsf{ct}$, with $B_\mathsf{ct} = LB(2nL + 1)$. As $q = \mathcal{O}(B^\mathsf{MP}_\mathsf{ct})$ (see Prop.~\ref{prop:qmhe}) for both $\mathsf{BFV}$ and $\mathsf{CKKS}$, we can see how the smudging noise significantly impacts ciphertext size and, subsequently, the efficiency of the private aggregation protocol. 

More specifically, the remarks for decryption with single-key HE given in Table~\ref{tab:bfv-ckks} can be adapted to their threshold variants by considering $B^{\mathsf{MP}}_\mathsf{ct}$ instead of $B_\mathsf{ct}$. This results in the expressions: (1) $B^{\mathsf{MP}}_{\mathsf{ct}} <  \frac{q}{2t} - \frac{t}{2}$ for Multiparty $\mathsf{BFV}$ ($\mathsf{MBFV}$) and, (2) $\Delta B_m + B^{\mathsf{MP}}_{ct} < q/2$, with $||m|| < B_m$ and $m = \sum_i^L m_i$, for Multiparty $\mathsf{CKKS}$ ($\mathsf{MCKKS}$).

\textbf{A comparison in terms of cipher expansion:} From the previous expressions, we can obtain lower-bounds for $q$, which, by following Prop.~\ref{prop:qmhe} (see Appendix), can be used to compare the cipher expansion of both schemes. In particular, $\mathsf{MCKKS}$ has a smaller $q$ than $\mathsf{MBFV}$ if:
\begin{equation} \label{EC}
    B^{\mathsf{MP}}_{\mathsf{ct}} > \frac{2\Delta_{\mathsf{CKKS}} B_m - t^2}{2(t - 1)},
\end{equation}
where we use $\Delta_{\mathsf{CKKS}}$ to make explicit that we refer to the $\Delta$ from $\mathsf{CKKS}$ in Table~\ref{tab:bfv-ckks}. Note that $\Delta_{\mathsf{CKKS}}$ and $t$ do not play the same role in $\mathsf{MCKKS}$ and $\mathsf{MBFV}$, respectively.

To ensure a fair comparison between both schemes, we express $\Delta_\mathsf{CKKS}$ in terms of the error margin $\epsilon = B_\mathsf{ct}^\mathsf{MP}/\Delta_\mathsf{CKKS}$ introduced in Section~\ref{sec:background}. This allows us to compare the achieved bit precision of $\mathsf{MCKKS}$, represented by $\log_2{(\epsilon^{-1}B_m)}$, directly with the bit precision $\log_2{t}$ achieved in $\mathsf{MBFV}$. Also, for simplicity in the comparison, we assume that the input plaintexts used in $\mathsf{MCKKS}$ are normalized beforehand to ensure an aggregation satisfying $B_m < 1$. These changes lead to the following inequality:
\begin{equation} \label{eq:he-comparison}
    \frac{t^2}{2B^{\mathsf{MP}}_{\mathsf{ct}}} + t - 1 > \epsilon^{-1}. 
\end{equation}

In Figures~\ref{BFV against CKKS} and~\ref{fig:varyingL}, we plot the expression from Eq.~\eqref{eq:he-comparison} on a logarithmic scale, using bit precisions $\log_2{t}$ and $\log_2{\epsilon^{-1}}$. Note that if a point ($\log_2{t}, \log_2{\epsilon^{-1}}$) belongs to the colored area, then the modulus $q$ required for $\mathsf{MCKKS}$ is smaller than that of $\mathsf{MBFV}$. In Figure~\ref{BFV against CKKS}, we set parameters to $L = 10$, $n = 8192$ and $B = 19.2$, while $\lambda$ ranges over the set $\{32, 64, 96, 128\}$. In Figure~\ref{fig:varyingL}, we fix the parameters at $n = 8192$, $B = 19.2$, and $\lambda = 128$, while $L$ takes values in $\{8, 128\}$. The parameter values are practical enough for the cross-silo setting, and ensure a bit security of at least $128$ for all the represented bit precisions. 

We observe that while increasing $\lambda$ and $L$ up to $128$ does reduce the range of parameters for which $\mathsf{MCKKS}$ outperforms $\mathsf{MBFV}$ (the effect is more significant when increasing $\lambda$), the approximate $\mathsf{MCKKS}$ compares similarly with $\mathsf{MBFV}$, getting better for very high bit precisions. Finally, by taking a closer look to the figures, we can also see how \eqref{eq:he-comparison} approximately follows a piecewise linear function with two intervals:
\begin{itemize}
\item In the first interval, the addend $t-1$ dominates the left part of Eq.~\eqref{eq:he-comparison}, allowing us to simplify the inequality to approximately $\log_2({t - 1}) > \log_2{\epsilon^{-1}}$. Here, both schemes behave similarly for the same bit precision.
\item In the second interval, the addend $\frac{t^2}{2B_\mathsf{ct}^\mathsf{MP}}$ dominates, simplifying the inequality into approximately $2\log_2{t} - \log_2{B_\mathsf{ct}^\mathsf{MP}} - 1 > \log_2{\epsilon^{-1}}$. In this case, the bit precision of $\mathsf{MCKKS}$ grows twice as fast as $\mathsf{MBFV}$. 
\end{itemize}

\begin{figure}[h!]
\vspace{-0.2cm}
\centering
\subfloat[$\lambda = 32$]{\includegraphics[width = 1.6in]{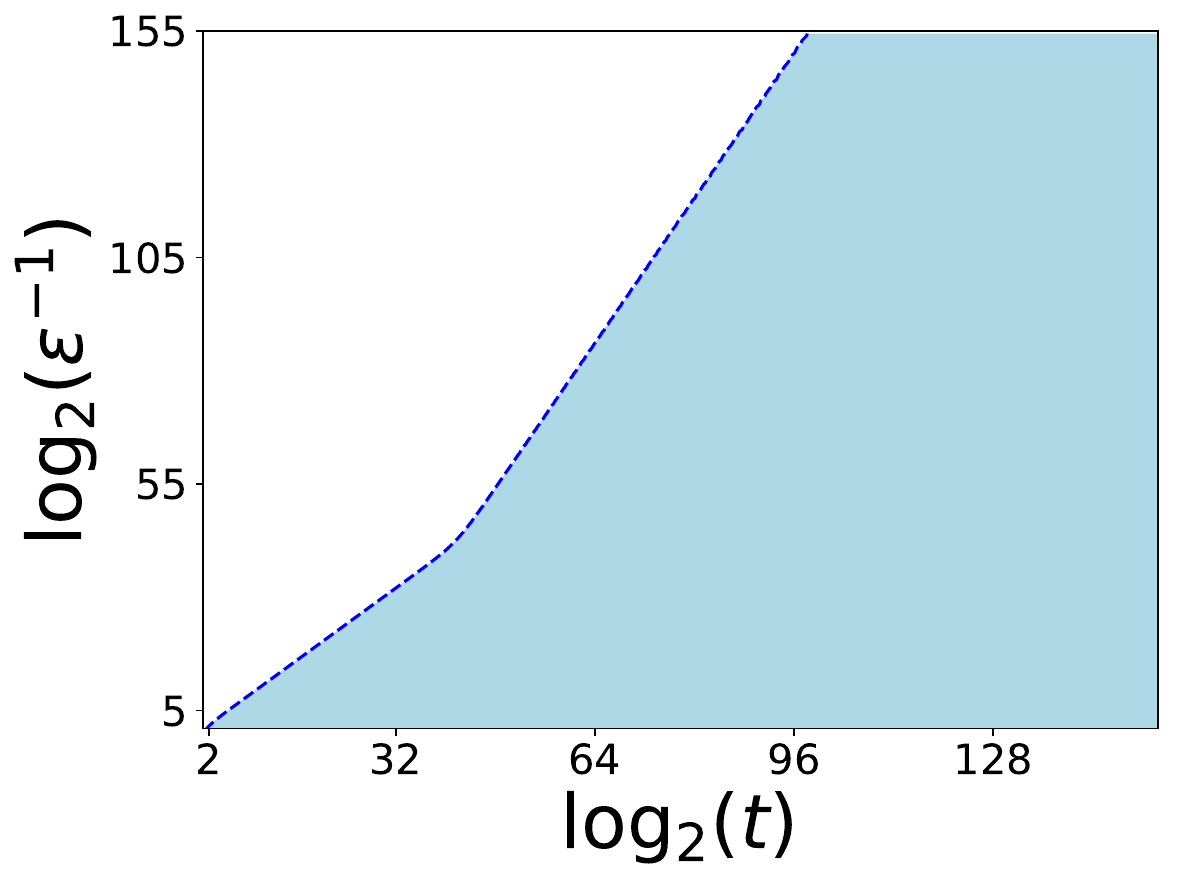}} \hspace{.2in}
\subfloat[$\lambda = 64$]{\includegraphics[width = 1.6in]{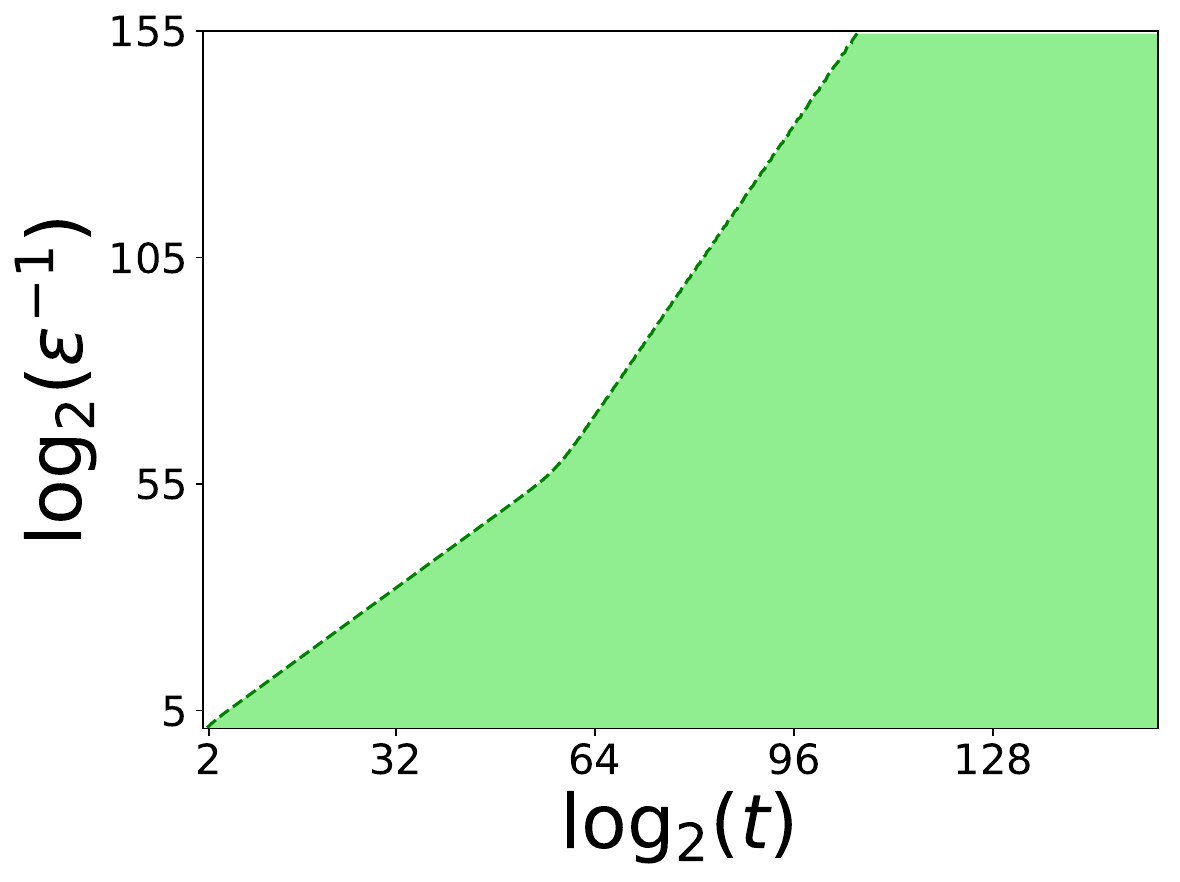}}\\
\subfloat[$\lambda = 96$]{\includegraphics[width = 1.6in]{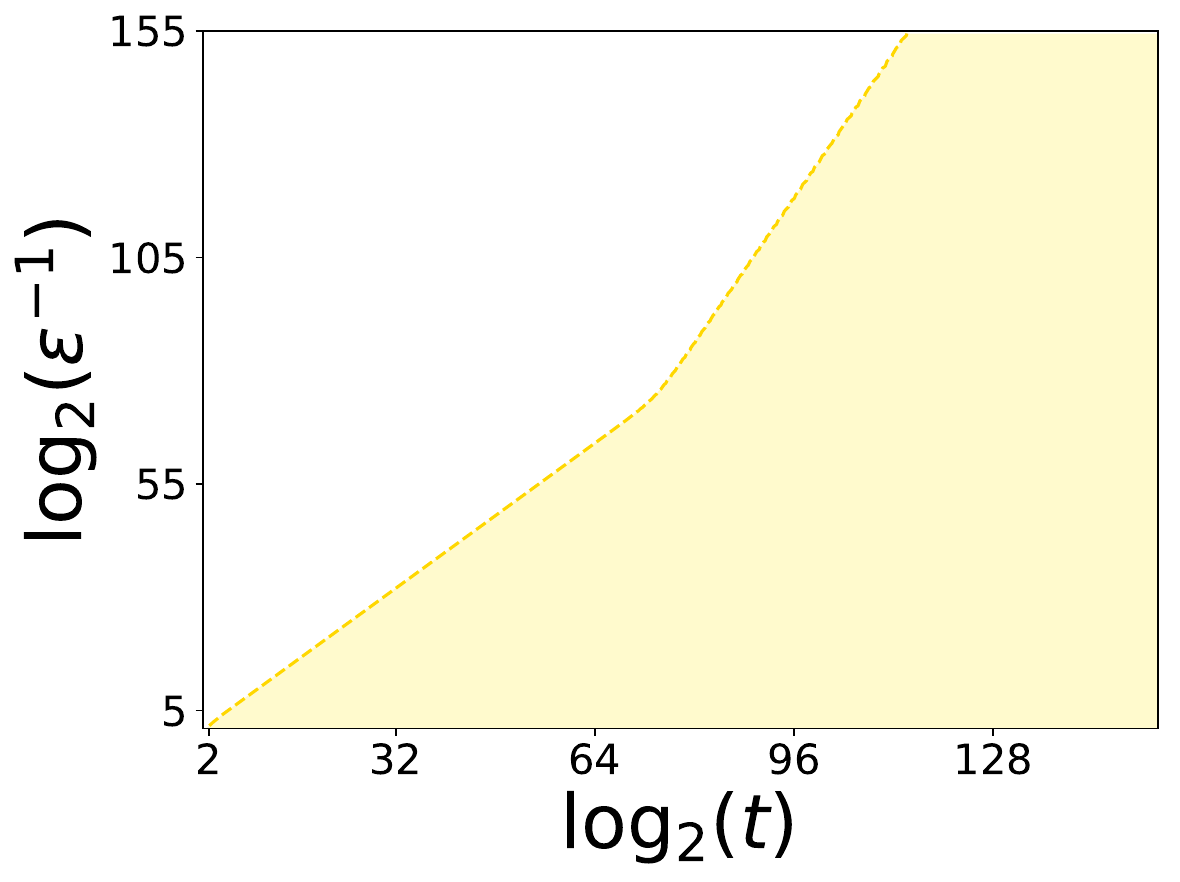}}\hspace{.2in}
\subfloat[$\lambda = 128$]{\includegraphics[width = 1.6in]{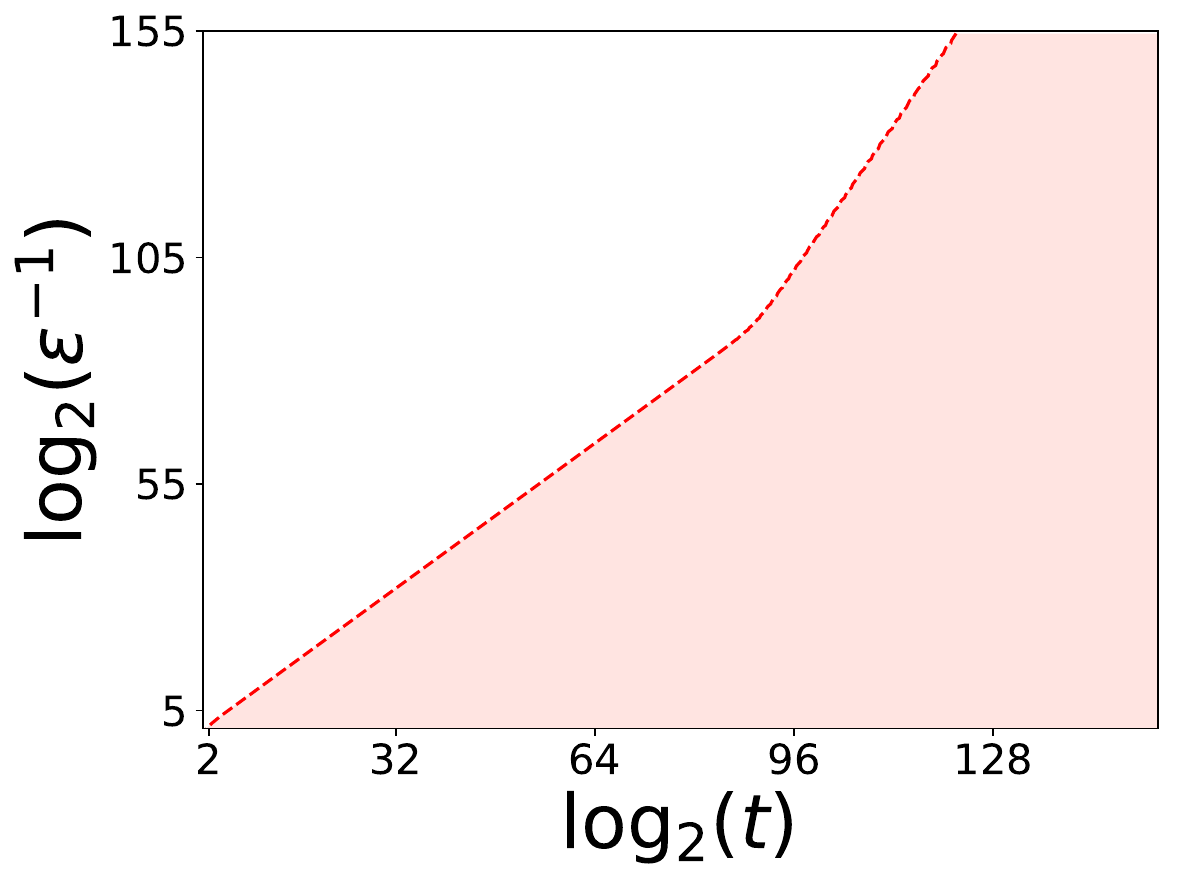}}
 
\caption{Comparison of $q$ for $\mathsf{MBFV}$ and $\mathsf{MCKKS}$ varying $\lambda$.}
\label{BFV against CKKS}
\vspace{-0.1cm}
\end{figure}

\begin{figure}[h!]
\centering
\subfloat[$L = 8$]{\includegraphics[width = 1.6in]{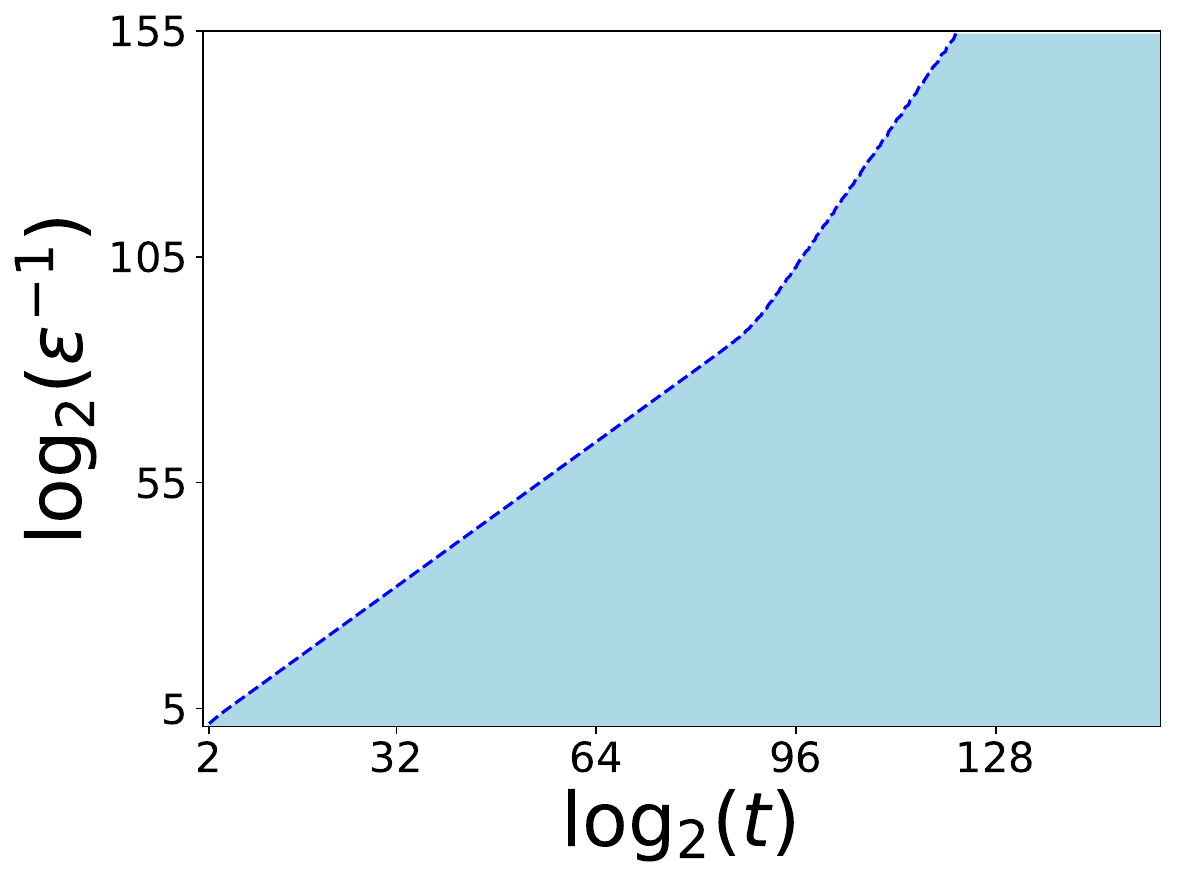}} \hspace{.2in}
\subfloat[$L = 128$]{\includegraphics[width = 1.6in]{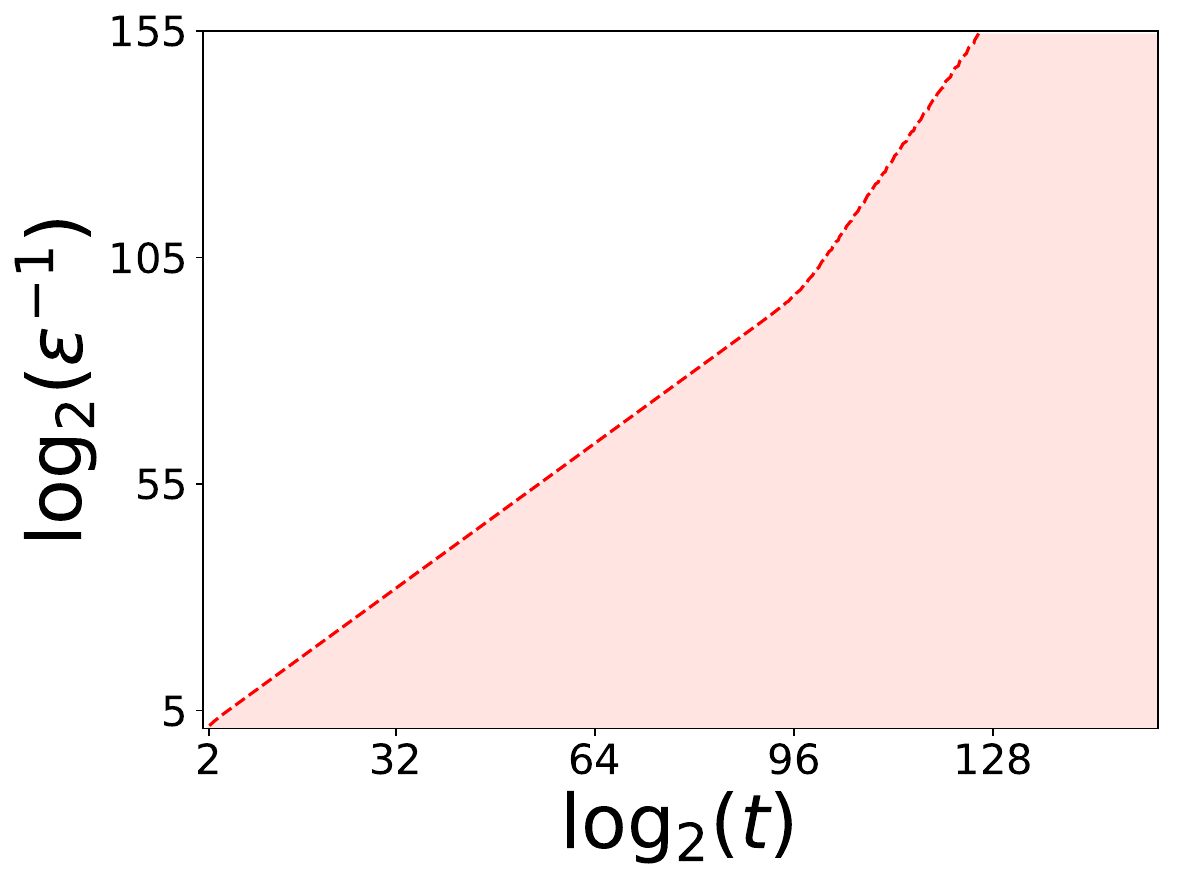}}
 
\caption{Comparison of $q$ for $\mathsf{MBFV}$ and $\mathsf{MCKKS}$ varying $L$.}
\label{fig:varyingL}
\vspace{-0.4cm}
\end{figure}

\textbf{Runtime comparisons:}
The implementation runtimes presented in this section were conducted single-threaded on an Intel Core i7-10750H CPU @ 2.60GHz $\times$ 12 with 31.1 GB. We considered two aggregation protocols with bit precision $\log_2{t}$ and $\log_2{\epsilon^{-1}}$ for $\mathsf{MBFV}$ and $\mathsf{MCKKS}$, respectively. Table~\ref{tab:exampleparams-comparison-mkhe-uvigo} includes three different parameter sets, and Table~\ref{tab:runtimes-comparison-mkhe-uvigo} presents their corresponding implementation runtimes in Lattigo~\cite{MTBH21}.

	\begin{table}[!htbp]
		\centering
			\renewcommand{\arraystretch}{1.3}
			\centering \scriptsize
			\begin{tabular}{c|c|c|c} \hline \hline
				\textbf{Param.} &\textbf{ Set $1$} $\mathsf{BFV}$\&$\mathsf{CKKS}$ & \textbf{Set $2$ $\mathsf{BFV}$} & \textbf{Set $3$ $\mathsf{CKKS}$} \\
				\hline \hline 
    			$\{n, L\}$          & $\{16384, 16\}$                 & $\{16384, 32\}$ & $\{16384, 32\}$ \\
				$\{t, \epsilon^{-1}\}$ \scriptsize{[bits]}          & $\{45, 45\}$ & $\{ 60, -\}$ & $\{ - , 60\}$  \\
                $\{\#\mbox{Limbs}, q$ \scriptsize{[bits]}$\}$          & $\{4, 240\}$ & $\{10, 300\}$ & $\{9, 270\}$  \\
                $\{q_\mathsf{BFV}, q_\mathsf{CKKS}\}$ \scriptsize{[bits]}          & $\{232, 238\}$ & $\{ 280, -\}$ & $\{ - , 259\}$  \\
				\hline \hline 
			\end{tabular}
		\caption{Example parameter sets for the private aggregation protocol ($\sigma = 3.2$, $\lambda = 128$ and $\mathsf{bit} \mbox{ } \mathsf{security} = 128$).}
        \label{tab:exampleparams-comparison-mkhe-uvigo}
  \vspace{-0.4cm}
	\end{table}

	\begin{table}[!htbp]
		\centering
			\renewcommand{\arraystretch}{1.3}
			\centering \scriptsize
			\begin{tabular}{c|c|c|c} \hline \hline 
				\textbf{Protocol step} &\textbf{ Par. set $1$} & \textbf{Par. set $2$} & \textbf{Par. set $3$} \\
				\hline \hline
                Col. Key Gen. & $\mathsf{Client}$: 4.4 $m s$ & $\mathsf{Client}$: 10.3 $m s$ & $\mathsf{Client}$: 10.1 $m s$ \\	
    			Encryption & $\mathsf{Client}$: 1.1 $s$ & $\mathsf{Client}$: 2.3 $s$ & $\mathsf{Client}$: 2.1 $s$  \\
				Aggregation  & $\mathsf{Agg}$: 244.3 $ms$ & $\mathsf{Agg}$: 1.2 $s$ & $\mathsf{Agg}$: 1.1 $s$  \\
                Col. Dec. & $\mathsf{Client}$: 2.5 $s$ & $\mathsf{Client}$: 5.2 $s$ & $\mathsf{Client}$: 4.8 $s$ \\
                Total runtime & 3.9 $s$ & 8.7 $s$ & 8 $s$\\
				\hline \hline
			\end{tabular}
		\caption{Implementation runtimes for the private aggregation protocol ($N_\mathsf{ModelParameters} = 1638400$).}
        \label{tab:runtimes-comparison-mkhe-uvigo}
  \vspace{-0.1cm}
	\end{table}

\textbf{Further clarifications on approximate HE:} There are a few points that we must clarify regarding our comparison. Firstly, we have considered a multiplication by $t/q$ during decryption, following the $\mathsf{BFV}$ description as originally reported in~\cite{FV12}. The rounding error introduced when $t$ does not divide $q$ is precisely the source of the extra addend $\frac{t^2}{2B_\mathsf{ct}^\mathsf{MP}}$ in Eq.\eqref{eq:he-comparison}. If we instead modify the decryption to compute a division by $\Delta$, Eq.\eqref{eq:he-comparison} would resemble the left interval of Figures~\ref{BFV against CKKS} and~\ref{fig:varyingL}.

Consequently, $\epsilon^{-1}$ in $\mathsf{MCKKS}$ grows similarly to $\log_2{t}$ for this modified $\mathsf{MBFV}$. Still, $\mathsf{CKKS}$ presents a significant advantage when used for private aggregation because it naturally introduces noise into the decrypted result. In contrast, exact HE protects the privacy of clients' inputs during homomorphic computation, but does not provide any additional privacy guarantees on the decrypted aggregated models against inference attacks~\cite{MOJC23}. In fact, one possibility could be to incorporate DP along with the exact HE computation~\cite{SCSSG23}. This can be done by homomorphically adding noise into the encryptions, which requires saving part of the $\log_2{t}$ available bits. Instead, following our previous analysis, the bit precision $\epsilon^{-1}$ considered in $\mathsf{CKKS}$ does not suffer any change as the effect of the noise is already contemplated in its definition.

The question of whether the noise present in $\mathsf{CKKS}$ could be reused to provide theoretical DP guarantees has been very recently explored in~\cite{Ogilvie24}. The authors are able to provide a certain privacy budget for the harder case in which the homomorphic evaluation makes the internal noise dependent on the input messages. Note that for the average aggregation rule used in this work, the situation is simpler as the noise added with $\mathsf{CKKS}$ is independent of the input local updates.

\section{Conclusions and future work}
\label{sec:conc}

This work surveys the use of threshold variants of exact and approximate RLWE-based HE for the efficient implementation of private average aggregation. While these types of schemes seem to be a perfect fit for executing the aggregation primitive in FL protocols, recent work has demonstrated the existence of some unexpected security vulnerabilities under the $\mathsf{IND}$-$\mathsf{CPA}^\mathsf{D}$ security model if these HE schemes are not correctly instantiated. We detail the use of smudging noise with a large variance as the main defense proposed by recent works and discuss its impact on performance. We provide an exhaustive analysis comparing the effective bit precision that can be achieved when applying this defense for two concrete threshold variants of HE schemes, such as $\mathsf{BFV}$ and $\mathsf{CKKS}$. Our analysis indicates that $\mathsf{CKKS}$-based average aggregations compare well with $\mathsf{BFV}$-based solutions. Moreover, we discuss how the approximate nature of $\mathsf{CKKS}$ can provide additional privacy guarantees for the decrypted aggregated models against inference attacks. As future work, we intend to explore the use of more complex aggregation rules, optimize the variance required for the smudging noise by taking into account the peculiarities of the FL scenario compared to the more general case of $\mathsf{IND}$-$\mathsf{CPA}^\mathsf{D}$, and provide a more exhaustive performance comparison between both schemes by deploying them within a practical FL use case.

\section*{Acknowledgment}
GPSC is partially supported by the European Union's Horizon Europe Framework Programme for Research and Innovation Action under project TRUMPET (proj. no. 101070038), by FEDER and Xunta de Galicia under project ``Grupos de Referencia Competitiva'' (ED431C 2021/47), by FEDER and MCIN/AEI under project FELDSPAR (TED2021-130624B-C21), and by the ``NextGenerationEU/PRTR'' under TRUFFLES and a Margarita Salas grant of the Universidade de Vigo.

\bibliographystyle{IEEEtran}
\bibliography{biblio.bib}

\appendix
\begin{lemma} \label{lemma:correct-decryption-single-key}
    We follow the notation for $\mathsf{BFV}$ indicated in Table~\ref{tab:bfv-ckks}, and assume that any $e \leftarrow \chi$ satisfies $||e|| \leq B$. For a fresh ciphertext $\mathsf{ct} = (c_0, c_1)$, we have $[c_0 + c_1 \cdot s]_q = \Delta m + e_{\mathsf{ct}}$ with $||e_{\mathsf{ct}}||$ $\leq$ $(2n+1)B$. This implies that whenever $(2n+1)B< \frac{q}{2t} - \frac{t}{2},$ decryption works correctly.
\end{lemma}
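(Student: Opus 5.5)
I will expand the decryption identity and bound the noise term coefficient-wise using the expansion factor of $R=\mathbb{Z}[x]/(1+x^n)$. Then I will show that the BFV rounding step recovers $m$ exactly under the stated inequality.

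\textbf{Noise bound.} With $\mathsf{pk}=(p_0,p_1)=(-s p_1+e,p_1)$ and $\mathsf{ct}=(\Delta m+u p_0+e_0,\ u p_1+e_1)$, I compute
\[
c_0+c_1 s=\Delta m - u s p_1 + u e + e_0 + u p_1 s + e_1 s=\Delta m + \underbrace{u e + e_0 + s e_1}_{e_{\mathsf{ct}}} \pmod q .
\]
The terms $\pm u s p_1$ cancel exactly in $R_q$. For the products I will use the fact that, in $\mathbb{Z}[x]/(1+x^n)$, every coefficient of $ab$ is a signed sum of $n$ products $a_i b_j$. Hence $\|ab\|\le n\|a\|\|b\|$. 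Since $u,s\in R_3$ have ternary coefficients ($\|u\|,\|s\|\le 1$) and $\|e\|,\|e_0\|,\|e_1\|\le B$, the triangle inequality gives $\|e_{\mathsf{ct}}\|\le nB+B+nB=(2n+1)B$. Because $q/(2t)$ is tiny compared to $q/2$, this bound is below $q/2$, so the centered reduction $[\cdot]_q$ returns $\Delta m+e_{\mathsf{ct}}$ up to a multiple of $q$ that I must carry through.

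\textbf{Correct decryption.} Write $q=\Delta t + r_t$ with $0\le r_t<t$. Then $[c_0+c_1s]_q=\Delta m + e_{\mathsf{ct}} + q k$ for some integer polynomial $k$, where $m\in R_t$ is taken with representatives in $(-t/2,t/2]$. Multiplying by $t/q$ gives
\[
\tfrac{t}{q}(\Delta m + e_{\mathsf{ct}} + qk)= m + tk + \tfrac{t}{q}\bigl(e_{\mathsf{ct}} - r_t m\bigr),
\]
using $t\Delta = q - r_t$. Rounding and reducing mod $t$ returns $m$ as long as each coefficient satisfies $\|\tfrac{t}{q}(e_{\mathsf{ct}}-r_t m)\|<\tfrac12$. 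For this it suffices that $\|e_{\mathsf{ct}}\| + r_t\|m\| < \tfrac{q}{2t}$. Now $r_t\|m\|\le (t-1)\cdot t/2 \le t/2\cdot t$, and more carefully $r_t\|m\| < t\cdot \tfrac{t}{2}\cdot\tfrac1t\cdot t$. I will bound it as $\tfrac{t}{q}r_t\|m\|\le \tfrac{t^2}{2q}$, which amounts to requiring $\|e_{\mathsf{ct}}\|<\tfrac{q}{2t}-\tfrac{t}{2}$. Combined with the first part, the hypothesis $(2n+1)B<\tfrac{q}{2t}-\tfrac t2$ suffices.

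\textbf{Main obstacle.} The delicate step is the rounding-error bookkeeping in the second part. I need to get the exact constant $t/2$, which depends on the convention $|m_i|\le t/2$ and $r_t<t$ giving $r_t|m_i|\le t^2/2$. I also need to handle the hidden multiple $qk$ correctly, so that $tk$ vanishes mod $t$. I will first try the representatives $(-t/2,t/2]$ for $m$ so that the product bound lands precisely on $\tfrac{t}{2}$ after dividing by $t$.
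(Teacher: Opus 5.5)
Your strategy matches the paper's. You bound $e_{\mathsf{ct}}=ue+e_0+se_1$ by $(2n+1)B$ via the expansion factor, write $\Delta$ as $q/t$ minus a remainder, and require the residual after scaling by $t/q$ to have norm below $\tfrac12$. The noise bound is fine. Carrying the wrap-around multiple $qk$ (which becomes $tk$ and vanishes mod $t$) is slightly more careful than the paper, which treats $[c_0+c_1s]_q=\Delta m+e_{\mathsf{ct}}$ as an exact equality.

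The gap is in the rounding identity. From $t\Delta=q-r_t$ you get $\tfrac{t}{q}\Delta m=m-\tfrac{r_t}{q}m$, hence
\[
\tfrac{t}{q}\left(\Delta m+e_{\mathsf{ct}}+qk\right)=m+tk+\tfrac{t}{q}\Bigl(e_{\mathsf{ct}}-\tfrac{r_t}{t}\,m\Bigr).
\]
Your version, $m+tk+\tfrac tq(e_{\mathsf{ct}}-r_tm)$, carries a spurious factor $t$ on the $m$-term. For $q=10$, $t=3$, $\Delta=3$, $m=1$, $e_{\mathsf{ct}}=0$, the true value is $0.9$ while your formula gives $0.7$.

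Because of this slip, your bookkeeping cannot close:
\begin{itemize}
\item Your sufficient condition $\|e_{\mathsf{ct}}\|+r_t\|m\|<\tfrac{q}{2t}$, with $r_t\|m\|$ up to $(t-1)\tfrac t2$, only gives correctness under $\|e_{\mathsf{ct}}\|<\tfrac{q}{2t}-\tfrac{t(t-1)}{2}$. That is much stronger than the lemma's hypothesis.
\item The estimate $\tfrac tq r_t\|m\|\le\tfrac{t^2}{2q}$, which you use to recover the $\tfrac t2$, is false. It amounts to $r_t\|m\|\le\tfrac t2$, contradicting the bound $(t-1)\tfrac t2$ you wrote one line earlier.
\end{itemize}
So as written, the chain does not reach the condition $\|e_{\mathsf{ct}}\|<\tfrac{q}{2t}-\tfrac t2$.

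The repair is exactly the paper's normalization. Put $r=r_t/t\in[0,1)$, so that $\Delta=\tfrac qt-r$ and the residual is $\tfrac tq(e_{\mathsf{ct}}-rm)$. Since $r<1$ and $\|m\|\le\tfrac t2$, you get $\|rm\|<\tfrac t2$. Equivalently, $\tfrac{r_t}{q}\|m\|\le\tfrac{(t-1)t}{2q}<\tfrac{t^2}{2q}$, which is the inequality you were aiming for. Then $\bigl\|\tfrac tq(e_{\mathsf{ct}}-rm)\bigr\|<\tfrac tq\bigl(\|e_{\mathsf{ct}}\|+\tfrac t2\bigr)$, and the right-hand side is below $\tfrac12$ exactly when $\|e_{\mathsf{ct}}\|<\tfrac{q}{2t}-\tfrac t2$. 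That follows from $(2n+1)B<\tfrac{q}{2t}-\tfrac t2$. With this one-line correction, your argument coincides with the paper's proof.
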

\begin{proof}
    We start by computing $[c_0 + c_1 \cdot s]_q = \Delta m + \underbrace{u\cdot e + e_0 + e_1\cdot s}_{e_\mathsf{ct}}$, from which we can directly upper-bound the error polynomial $e_\mathsf{ct}$ as:
    \begin{align}
    ||e_{\mathsf{ct}}|| = & \mbox{ } ||u\cdot e + e_0 + e_1\cdot s|| \\
            \leq & \mbox{ } ||u\cdot e|| + ||e_0|| + ||e_1\cdot s|| \\
            \leq &  \mbox{ } \delta_R B + B + \delta_R B \\
            \leq &  \mbox{ } n B + B + n B \\
            = &  \mbox{ } (2n+1)B.
            \label{eq:bound-error-bfv}
    \end{align}
    Now, the decryption process requires to multiply by $\frac{t}{q}$ and apply a final coefficient-wise rounding:
    $$\left \lfloor\frac{t}{q} ([c_0 + c_1 \cdot s]_q) \right \rceil = \left \lfloor\frac{t}{q}(\Delta m + e_{ct}) \right \rceil,$$
    which must output $m$ for correct decryption. If we define $\Delta = \frac{q}{t} - r$ with $0\leq r < 1$, the condition for correct decryption can be expressed equivalently as: 
    $$||\frac{t}{q}(-rm + e_{ct})|| < \frac 1 2.$$
    By upper-bounding the expression in the left, we obtain:
    \begin{equation}
    \label{eq:bfv-cond-correctness}
    ||\frac{t}{q}(-rm + e_{ct})|| \leq \frac t q(||-rm|| + ||e_{ct}||) < \frac t q(\frac{t}{2}+||e_{ct}||) < 1/2.
    \end{equation}
    Therefore, combining the expressions~\eqref{eq:bound-error-bfv} and~\eqref{eq:bfv-cond-correctness}, we see that decryption correctness holds if $(2n + 1) B < \frac q {2t} -\frac{t}{2}.$
\end{proof}
\begin{prop} \label{prop:qmhe}
    Following the descriptions of the primitives from Tables~\ref{tab:bfv-ckks} and~\ref{tab:mhempc}, the ciphertext modulus $q$ for $\mathsf{MBFV}$ is larger than the $q$ for $\mathsf{MCKKS}$ if the next condition is satisfied: $B^{\mathsf{MP}}_{\mathsf{ct}}>\frac{2\Delta B_m - t^2}{2(t-1)}$. In Section~\ref{sec:comp-bfv-ckks}, we utilize this expression as Eq.~\ref{EC} to compare the cipher expansion of both schemes.
\end{prop}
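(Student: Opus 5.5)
The plan is to turn each scheme's decryption-correctness condition into an explicit lower bound on $q$, and then compare the two bounds directly.

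\textbf{Lower bound for $\mathsf{MBFV}$.} I start from the threshold version of Lemma~\ref{lemma:correct-decryption-single-key}. In that lemma's proof, the bound $(2n+1)B$ on $||e_{\mathsf{ct}}||$ is used only in the final step. Substituting the multiparty noise bound $B^{\mathsf{MP}}_{\mathsf{ct}}$, which covers $e_{\mathsf{ct}}+e_{\mathsf{smg}}$ as derived in Section~\ref{sec:comp-bfv-ckks}, the same argument with $\Delta=\frac{q}{t}-r$ gives the sufficient condition
\[
B^{\mathsf{MP}}_{\mathsf{ct}} < \frac{q}{2t}-\frac{t}{2}.
\]
Rearranging yields the smallest admissible modulus
\[
q_{\mathsf{BFV}} = 2tB^{\mathsf{MP}}_{\mathsf{ct}} + t^2 ,
\]
taken as the threshold value of this strict inequality.

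\textbf{Lower bound for $\mathsf{MCKKS}$.} Decryption returns $[\Delta m + e_{\mathsf{ct}} + e_{\mathsf{smg}}]_q$. This equals the integer $\Delta m + e_{\mathsf{ct}} + e_{\mathsf{smg}}$, with no wrap-around, as long as its infinity norm stays below $q/2$. A sufficient condition is
\[
\Delta B_m + B^{\mathsf{MP}}_{\mathsf{ct}} < \frac{q}{2},
\]
which gives
\[
q_{\mathsf{CKKS}} = 2\Delta B_m + 2B^{\mathsf{MP}}_{\mathsf{ct}} .
\]

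\textbf{Comparison.} The condition $q_{\mathsf{BFV}} > q_{\mathsf{CKKS}}$ reads
\[
2tB^{\mathsf{MP}}_{\mathsf{ct}} + t^2 > 2\Delta B_m + 2B^{\mathsf{MP}}_{\mathsf{ct}} ,
\]
that is,
\[
2(t-1)B^{\mathsf{MP}}_{\mathsf{ct}} > 2\Delta B_m - t^2 .
\]
Since $t\ge 2$ makes $t-1>0$, dividing by $2(t-1)$ preserves the inequality and gives exactly $B^{\mathsf{MP}}_{\mathsf{ct}} > \frac{2\Delta B_m - t^2}{2(t-1)}$, as stated.

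\textbf{Main obstacle.} The algebra is routine; the delicate point is justifying what "the $q$" of each scheme means. I would state explicitly that both moduli are defined as the minimal values allowed by these worst-case sufficient correctness conditions, and that both use the same noise bound $B^{\mathsf{MP}}_{\mathsf{ct}}$. This is fair because the key generation, encryption, addition and collective decryption of Tables~\ref{tab:bfv-ckks} and~\ref{tab:mhempc} coincide for both schemes except for the final decoding step.

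I would also confirm that $B^{\mathsf{MP}}_{\mathsf{ct}}$ bounds the total noise after $L$ additions plus the $L$ smudging terms. This follows from the triangle inequality together with the expansion-factor bound $\delta_R\le n$ applied to the collective key $\sum_i \mathsf{sk}_i$ and error $\sum_i e_{\pk,i}$. This is exactly the accounting already carried out in Section~\ref{sec:comp-bfv-ckks}, so I would cite it rather than redo it.
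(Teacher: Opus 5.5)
Your proposal is correct and follows essentially the same route as the paper: both proofs turn the $\mathsf{MBFV}$ condition $B^{\mathsf{MP}}_{\mathsf{ct}} < \frac{q}{2t}-\frac{t}{2}$ and the $\mathsf{MCKKS}$ condition $\Delta B_m + B^{\mathsf{MP}}_{\mathsf{ct}} < \frac{q}{2}$ into threshold values of $q$, compare $2tB^{\mathsf{MP}}_{\mathsf{ct}}+t^2$ with $2\Delta B_m+2B^{\mathsf{MP}}_{\mathsf{ct}}$, and rearrange. You also make explicit two points the paper leaves implicit: that $t-1>0$ justifies the final division, and that ``the $q$'' of each scheme means the minimal modulus allowed by these sufficient correctness conditions with a common noise bound $B^{\mathsf{MP}}_{\mathsf{ct}}$.
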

\begin{proof}
    For decryption correctness, both schemes must satisfy:
    \begin{itemize}
        \item $q_{\mathsf{MBFV}}:  B^{\mathsf{MP}}_{\mathsf{ct}} < \frac{q}{2t} - \frac{t}{2}$.
        \item $q_{\mathsf{MCKKS}}: \Delta B_m + B^{\mathsf{MP}}_{\mathsf{ct}} < \frac{q}{2}$.
    \end{itemize}
    Comparing both expressions, we have that $q_{\mathsf{BFV}} > q_{\mathsf{CKKS}}$ if:
    \begin{align*}
    &2tB^{\mathsf{MP}}_{\mathsf{ct}} + t^2& & >& &2\Delta B_m + 2B^{\mathsf{MP}}_{\mathsf{ct}} \\
    &2(t -1)B^{\mathsf{MP}}_{\mathsf{ct}}& & >& &2\Delta B_m -t^2 \\
    &B^{\mathsf{MP}}_{\mathsf{ct}}& & >&  &\frac{2\Delta B_m - t^2}{2(t-1)}.
    \end{align*}
\end{proof}

\end{document}